\def\ol#1{\overline{#1}}
\def\C{{\mathbb C}}
\def\F{{\mathbb F}}
\def\entspricht{\mathrel{\hat{=}}}
\def\dist{\mathop{\rm dist}}
\def\wgt{\mathop{\rm wgt}}
\def\bra#1{\langle#1|}
\def\ket#1{|#1\rangle}
\long\def\dummy#1{}
\newtheorem{theorem}{Theorem}
\newtheorem{definition}[theorem]{Definition}
\newtheorem{lemma}[theorem]{Lemma}
\newtheorem{remark}[theorem]{Remark}
\begin{document}

\title{Non-Additive Quantum Codes from\\Goethals and Preparata Codes}

\author{\authorblockN{Markus Grassl}
\authorblockA{
Institute for Quantum Optics and Quantum Information\\
Austrian Academy of Sciences\\
Technikerstra{\ss}e 21a, 6020 Innsbruck, Austria\\
Email: markus.grassl@oeaw.ac.at}
\and
\authorblockN{Martin R\"otteler}
\authorblockA{
NEC Laboratories America, Inc.\\
4 Independence Way, Suite 200\\
Princeton, NJ 08540, USA\\
Email: mroetteler@nec-labs.com
}
}

\maketitle

\begin{abstract}
We extend the stabilizer formalism to a class of non-additive quantum
codes which are constructed from non-linear classical codes.  As an
example, we present infinite families of non-additive codes which
are derived from Goethals and Preparata codes.
\end{abstract}

\section{Introduction}
Recently, several new non-additive quantum error-correcting codes
(QECCs) have been constructed that have higher dimension than additive
QECCs with the same length and minimum
distance\cite{YCLO07,CSSZ07,YCO07}.  The first example of such a code
is the code $((5,6,2))$ of \cite{RHSS97} which has been found via
numerical optimization. Afterwards, the code has been identified as
the span of a particular state and its image under five unitary
transformations (see also \cite{GrBe97}).  The recently discovered
codes $((9,12,3))$ and $((10,24,3))$ (see \cite{YCLO07,YCO07}) start
with a so-called graph state which corresponds to a stabilizer state,
i.\,e., a stabilizer code with parameters $[[n,0,d]]$ (see
\cite{ScWe02,GKR02}).  A basis of the quantum code is obtained by this
initial state together with its image under some tensor products of
Pauli matrices and identity (Pauli operators).  The distance between
any pair of these states can be defined as the minimal weight of a
Pauli operator transforming one state into the other (see below).  The
problem of finding a code of high dimension can be stated as finding a
maximal clique in a \emph{search graph} whose vertices are all images
of the initial state.  There is an edge between two states if their
distance is at least the prescribed minimum distance.  Using the
formalism of graph states, Cross et al. show in \cite{CSSZ07} that it
is sufficient to consider $Z$-only operators in order to define the
search graph, but this has the disadvantage that the distance between
two of these states is not necessarily equal to the number of $Z$
operators in the tensor product.  Moreover, constructing a code
$((n,K,d))$ requires to find a clique of size $K$ in a search graph
with $2^n$ vertices.  Fixing one basis state, the graph can be
slightly simplified.  However, for the code $((10,24,3))$ the
simplified graph still has 678 vertices and 149.178 edges.

A different approach for constructing non-additive QECCs based on
Boolean functions and projection operators has been presented in
\cite{AgCa07}.  Evaluating the Boolean function at the projection
operators, the code is given as the sum of the image of products of
the projections. Finally, we mention the non-additive QECCs obtained
by the method given in \cite{RoVa99}. Those codes have the same
parameters as the additive CSS codes that can be obtained using the
same underlying binary codes.

In this paper, we extend the approach of combining unitary images of
stabilizer {\em states} to arbitrary stabilizer {\em codes} as
starting point.  For a stabilizer code $[[n,k,d]]$, the search graph
has only $2^{n-k}$ vertices, and a clique of size $K$ yields a quantum
code of dimension $K\times 2^k$.  What is more, we present some
infinite families of non-additive quantum codes which are constructed
using non-linear binary codes, avoiding the NP-hard problem of finding
a maximal clique in the search graph. Finally, we show how to obtain
encoding circuits for the resulting non-additive codes using encoding
circuits for the underlying stabilizer codes.

\section{A brief review of the stabilizer formalism}
We start with a brief review of the stabilizer formalism for quantum
error-correcting codes and the connection to additive codes over
$GF(4)$ (see, e.\,g., \cite{Got96,CRSS98}). A stabilizer code encoding
$k$ qubits into $n$ qubits having minimum distance $d$, denoted by
${\cal C}=[[n,k,d]]$, is a subspace of dimension $2^k$ of the complex
Hilbert space $(\C^2)^{\otimes n}$ of dimension $2^n$.  The code is
the joint eigenspace of a set of $n-k$ commuting operators
$S_1,\ldots,S_{n-k}$ which are tensor products of the Pauli matrices
\[
\sigma_x=\left(\begin{matrix}0&1\\1&0\end{matrix}\right),\quad
\sigma_y=\left(\begin{matrix}0&-i\\i&0\end{matrix}\right),\quad
\sigma_z=\left(\begin{matrix}1&0\\0&-1\end{matrix}\right),
\]
or identity.  The operators $S_i$ generate an Abelian group ${\cal S}$
with $2^{n-k}$ elements, called the \emph{stabilizer} of the code.  It
is a subgroup of the $n$-qubit Pauli group ${\cal P}_n$ which itself
is generated by the tensor product of $n$ Pauli matrices and identity.
We further require that ${\cal S}$ does not contain any non-trivial
multiple of identity.  The {\em normalizer} of ${\cal S}$ in ${\cal
  P}_n$, denoted by ${\cal N}$, acts on the code ${\cal C}=[[n,k,d]]$.
It is possible to identify $2k$ logical operators
$\ol{X}_1,\ldots,\ol{X}_k$ and $\ol{Z}_1,\ldots,\ol{Z}_k$ such that
these operators commute with any element in the stabilizer ${\cal S}$,
and such that together with ${\cal S}$ they generate the normalizer
${\cal N}$ of the code.  The operators $\ol{X}_i$ mutually commute,
and so do the operators $\ol{Z}_j$. The operator $\ol{X}_i$
anti-commutes with the operator $\ol{Z}_j$ if $i=j$ and otherwise
commutes with it.

It has been shown that the $n$-qubit Pauli group corresponds to a
symplectic geometry and that one can reduce the problem of
constructing stabilizer codes to finding additive codes over $GF(4)$
that are self-orthogonal with respect to a symplectic inner product
\cite{CRSS96,CRSS98}. Up to a scalar multiple, the elements of ${\cal
  P}_1$ can be expressed as $\sigma_x^a\sigma_z^b$ where
$(a,b)\in\F_2^2$ is a binary vector. Choosing the basis $\{1,\omega\}$
of $GF(4)$, where $\omega$ is a primitive element of $GF(4)$ with
$\omega^2+\omega+1=0$, we get the following correspondence between the
Pauli matrices, elements of $GF(4)$, and binary vectors of length two:
\[
\begin{array}{c|c|c}
\text{operator} & GF(4) & \F_2^2\\
\hline
I        & 0        & (00)\\
\sigma_x & 1        & (10)\\
\sigma_y & \omega^2 & (11)\\
\sigma_z & \omega   & (01)
\end{array}
\]
This mapping extends naturally to tensor products of $n$ Pauli matrices
being mapped to vectors of length $n$ over $GF(4)$ or binary vectors
of length $2n$.  We rearrange the latter in such a way that the first
$n$ coordinates correspond to the exponents of the operators
$\sigma_x$ and write the vector as $(a|b)$, i.\,e.,
\begin{equation}\label{eq:binary_rep}
g=\sigma_x^{a_1}\sigma_z^{b_1}\otimes\ldots\otimes\sigma_x^{a_n}\sigma_z^{b_n}
\entspricht(a|b)=(g^X|g^Z).
\end{equation}
Two operators corresponding to the binary vectors $(a|b)$ and $(c|d)$
commute if and only if the symplectic inner product
$a\cdot d-b\cdot c=0$. In terms of the binary representation, the
stabilizer corresponds to a binary code $C$ which is self-orthogonal
with respect to this symplectic inner product, and the normalizer
corresponds to the symplectic dual code $C^*$.  The stabilizer
together with the logical operators $\ol{Z}_i$ generate a self-dual
code.  In terms of the correspondence to vectors over $GF(4)$, the
stabilizer and normalizer correspond to an additive code over
$GF(4)$ and its dual with respect to an symplectic inner product,
respectively, which we will also denote by $C$ and $C^*$.  The
minimum distance $d$ of the quantum code is given as the minimum
weight in the set $C^*\setminus C$ which is lower bounded by the
minimum distance $d^*$ of $C^*$.  If $d=d^*$, the code is said to be
\emph{pure}, and for $d\ge d^*$, the code is said to be \emph{pure
  up to $d^*$}.

\section{The union of stabilizer codes}
Note that we have defined a stabilizer code ${\cal C}$ as the joint
eigenspace of the commuting operators $S_i$ generating the stabilizer
${\cal S}$.  The term \emph{stabilizer} suggests that the code is the
joint $+1$ eigenspace of the operators.  However, for each of the
generators $S_i$ we may choose either the eigenspace with eigenvalue
$+1$ or the eigenspace with eigenvalue $-1$.  This gives rise to a
decomposition of the space $(\C^2)^{\otimes n}$ into $2^{n-k}$
mutually orthogonal spaces which can be labeled by the eigenvalues of
the $n-k$ generators $S_i$, or equivalently, by the characters $\chi$
of the stabilizer group ${\cal S}$.  Moreover, the $n$-qubit Pauli
group ${\cal P}_n$ acts transitively on these spaces.

From now on we fix the code ${\cal C}$ as the the joint $+1$
eigenspace corresponding to the trivial character.  Let $t\in{\cal
  P}_n$ be an arbitrary $n$-qubit Pauli operator.  Then we can define
a character of ${\cal S}$ on the generators $S_i$ as follows
\[
\chi_{t}(S_i):=\begin{cases}
+1 & \text{if $t$ and $S_i$ commute,}\\
-1 & \text{if $t$ and $S_i$ anti-commute.}
\end{cases}
\]
As the elements of the normalizer ${\cal N}$ commute with all elements
of the stabilizer ${\cal S}$, two elements $t_1$ and $t_2$ define the
same character if $t_2^{-1}t_1\in{\cal N}$.  Hence the set of
characters corresponds to the cosets of ${\cal N}$ in ${\cal P}_n$.
If ${\cal T}$ is a set of coset representatives, we can write the
decomposition of the full space as
\begin{equation}\label{eq:full_decomposition}
(\C^2)^{\otimes n}=\bigoplus_{t\in{\cal T}} t {\cal C}.
\end{equation}
Note that measuring the eigenvalues of the generators $S_i$ projects
onto one of these space $t{\cal C}$, corresponding to the character
$\chi_t$ given by the sequence of eigenvalues.  In terms of the
classical codes, the eigenvalues correspond to an error-syndrome which
is obtained by computing the symplectic inner product of the received
vector with the $n-k$ vectors corresponding to the generators of the
stabilizer, i.\,e., a basis of the code $C$.  For all vectors of the
dual code $C^*$ corresponding to the normalizer ${\cal N}$, the inner
product is zero. So the different spaces $t{\cal C}$ correspond to
cosets $C^*+t$ of the code $C^*$.

As for a fixed code ${\cal C}$ two spaces $t_1{\cal C}$ and $t_2{\cal
  C}$ are either identical or orthogonal, we can define the distance
of them as follows:
\begin{equation}\label{def:distance_pauli}
\dist(t_1{\cal C},t_2{\cal C}):=\min\{\wgt(p):p\in{\cal P}_n\mid p t_1{\cal C}=t_2{\cal C}\}.
\end{equation}
Here $\wgt(p)$ is the number of tensor factors in the $n$-qubit Pauli
operator $p$ that are different from identity.  Clearly,
$\dist(t_1{\cal C},t_2{\cal C})=\dist(t_2^{-1}t_1{\cal C},{\cal C})$.
The distance (\ref{def:distance_pauli}) can also be expressed in terms
of the associated vectors over $GF(4)$.
\begin{lemma}\label{lemma:distance}
The distance of the spaces $t_1{\cal C}$ and $t_2{\cal C}$ equals the
minimum weight in the coset $C^*+t_1-t_2$, where we use $t_i$ to
denote both an $n$-qubit Pauli operator and the corresponding vector over $GF(4)$.
\end{lemma}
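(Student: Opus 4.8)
The plan is to first reduce to the case where $t_2$ is the identity operator. By the translation-invariance noted just before the lemma, $\dist(t_1{\cal C},t_2{\cal C})=\dist(t_2^{-1}t_1{\cal C},{\cal C})$, and under the correspondence (\ref{eq:binary_rep}) the operator $t_2^{-1}t_1$ is mapped, up to an irrelevant scalar, to the vector $t_1-t_2$ over $GF(4)$ (in characteristic two this equals $t_1+t_2$). Since the coset $C^*+t_1-t_2$ depends only on this vector, it suffices to show that $\dist(t{\cal C},{\cal C})$ equals the minimum weight in the coset $C^*+t$ for an arbitrary $t\in{\cal P}_n$.

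Next I would identify, inside ${\cal P}_n$, the set of operators fixing the subspace ${\cal C}$ setwise. Recall that ${\cal C}$ is the joint $+1$ eigenspace of the generators $S_i$ and that a Pauli operator $p$ sends it to the joint eigenspace with eigenvalues $\chi_p(S_i)$; by the orthogonal decomposition (\ref{eq:full_decomposition}) these eigenspaces are distinct for distinct characters. Hence $p{\cal C}={\cal C}$ exactly when $\chi_p$ is trivial, i.e. when $p$ commutes with every $S_i$. Because ${\cal S}$ contains no non-trivial multiple of the identity, the centralizer of ${\cal S}$ in ${\cal P}_n$ coincides with its normalizer ${\cal N}$; therefore $p{\cal C}={\cal C}$ if and only if $p\in{\cal N}$ (global phases act trivially on ${\cal C}$ and are irrelevant throughout). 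Consequently $p{\cal C}=t{\cal C}$ if and only if $t^{-1}p\in{\cal N}$, that is, $p\in t{\cal N}$, and so by (\ref{def:distance_pauli})
\begin{equation*}
\dist(t{\cal C},{\cal C})=\min\{\wgt(p):p\in t{\cal N}\}.
\end{equation*}

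Finally I would pass to the $GF(4)$ picture. Under the map (\ref{eq:binary_rep}), $\wgt(p)$ equals the Hamming weight of the associated length-$n$ vector over $GF(4)$, multiplication of Pauli operators corresponds to addition of vectors (scalars being dropped), and by the discussion of the preceding section the normalizer ${\cal N}$ corresponds to the symplectic dual code $C^*$. Hence the coset $t{\cal N}$ of Pauli operators corresponds exactly to the coset $C^*+t$ of $GF(4)$-vectors, and minimizing the weight over it gives $\dist(t{\cal C},{\cal C})=\min\{\wgt(v):v\in C^*+t\}$. Combining this with the reduction of the first paragraph ($t=t_2^{-1}t_1\entspricht t_1-t_2$) yields the statement of the lemma. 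I expect the one genuinely delicate point to be the setwise-stabilizer computation of the second paragraph: one must check that a Pauli operator preserving ${\cal C}$ as a set truly commutes with all of ${\cal S}$ rather than merely permuting the common eigenspaces, and that possible phase factors affect neither the action on ${\cal C}$ nor the weight — both of which rest on (\ref{eq:full_decomposition}) and on the hypothesis that ${\cal S}$ contains no non-trivial scalar.
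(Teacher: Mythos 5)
Your proof is correct and takes essentially the same route as the paper's: the paper's ``direct computation'' simply invokes the correspondence $t{\cal C}\leftrightarrow C^*+t$ set up in the paragraph preceding the lemma and computes $\dist(C^*+t_1,C^*+t_2)=\min\{\wgt(v):v\in C^*+t_1-t_2\}$. Your second and third paragraphs just make explicit the justification the paper leaves implicit, namely that the setwise stabilizer of ${\cal C}$ in ${\cal P}_n$ is ${\cal N}$ (using that ${\cal S}$ contains no non-trivial scalars), so the Pauli operators carrying ${\cal C}$ to $t{\cal C}$ form the coset corresponding to $C^*+t$.
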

\begin{proof}
Direct computation shows\\[-3.5ex]
\begin{alignat*}{3}
\dist(t_1{\cal C},t_2{\cal C})
&=\dist(C^*+t_1,C^*+t_2)\\
&=\dist(C^*+(t_1-t_2),C^*)\\
&=\min\{\wgt(c+t_1-t_2):c\in C^*\}\\
&=\min\{\wgt(v):v \in C^*+t_1-t_2\}.
\end{alignat*}
\vskip-1.25\baselineskip
\end{proof}

With this preparation, we are ready to present the general
construction of the union of stabilizer codes (see also
\cite{GrBe97}). The quantum code will be defined as the span of some
of the summands in (\ref{eq:full_decomposition}).
\begin{definition}[union stabilizer code]\label{def:unioncode}
Let ${\cal C}_0=[[n,k,d_0]]$ be a stabilizer code and let ${\cal
  T}_0=\{t_1,\ldots,t_K\}$ be a subset of the coset representatives of
the normalizer ${\cal N}_0$ of the code ${\cal C}_0$ in ${\cal P}_n$.
Then the \emph{union stabilizer code} is defined as
\[
{\cal C}=\bigoplus_{t\in {\cal T}_0} t{\cal C}_0.
\]
Without loss of generality we assume that ${\cal T}_0$ contains
identity. With the union stabilizer code ${\cal C}$ we associate the
(in general non-additive) \emph{union normalizer code} given by
\[
C^*=\bigcup_{t\in {\cal T}_0} C_0^*+t=\{c+t_i:c \in C_0^*, i=1,\ldots,K\},
\]
where $C_0^*$ denotes the additive code associated with the normalizer
${\cal N}_0$ of the stabilizer code ${\cal C}_0$.  We will refer to
both, the vectors $t_i$ and the corresponding unitary operators, as
\emph{translations}.
\end{definition}
A union stabilizer code can be defined in terms of binary vectors as
shown in Fig. \ref{fig:union_generators}.  The first $n-k$ rows
correspond to the binary vectors (cf. (\ref{eq:binary_rep}))
associated with the generators $S_i$ of the stabilizer ${\cal S}$ of
the code ${\cal C}_0$.  They generate the classical code $C_0$.  The
next $k$ rows correspond to the logical operators $\ol{Z}_j$, followed
by the $k$ logical operators $\ol{X}_i$.  The last $K$ rows correspond
to the $K$ translations $t_i$ defining the cosets of the classical
code $C_0^*$ and the unitary images of the stabilizer code ${\cal
  C}_0$, respectively.  We use curly brackets to stress the fact that
the set of operators ${\cal T}_0$ need not be closed under group
operation.  In general, the quantum code is not invariant under these
\emph{generalized logical $X$-operators}.  On the other hand, if
${\cal T}_0$ is closed under group operation, the resulting code will
be a stabilizer code where a basis of ${\cal T}_0$ defines an
additional set of logical $X$-operators.

\begin{figure}
\[
\def\dashes{\rule{2mm}{0.5pt}\rule{1mm}{0pt}\rule{2mm}{0.5pt}\rule{1mm}{0pt}\rule{2mm}{0.5pt}\rule{1mm}{0pt}\rule{2mm}{0.5pt}\rule{1mm}{0pt}\rule{2mm}{0.5pt}}
\begin{array}{c}
\left(\def\arraystretch{1.3}
\begin{array}{c|c}
S^X_1&S^Z_1\\
\vdots&\vdots\\
S^X_{n-k}&S^Z_{n-k}\\[0.75ex]
\hline
\ol{Z}^X_1&\ol{Z}^Z_1\rule{0pt}{2.75ex}\\
\vdots&\vdots\\
\ol{Z}^X_k&\ol{Z}^Z_k\\[-1ex]
\multicolumn{1}{@{}c|}{\dashes}&
\multicolumn{1}{c@{}}{\dashes}\\
\ol{X}^X_1&\ol{X}^Z_1\\
\vdots&\vdots\\
\ol{X}^X_k&\ol{X}^Z_k\\
\hline
\end{array}
\right)\\
\left\{\def\arraystretch{1.3}
\begin{array}{c|c}
\hbox to 12mm{\hfill$t^X_1$\hfill}&
\hbox to 12mm{\hfill$t^Z_1$\hfill}\\
\vdots&\vdots\\
t^X_K&t^Z_K
\end{array}
\right\}
\end{array}
\]
\caption{Arrangements of the vectors associated with a union
  stabilizer code.\label{fig:union_generators}} 
\end{figure}

\begin{theorem}
Let ${\cal C}$ be a union stabilizer code as in Definition
\ref{def:unioncode}.  The dimension of ${\cal C}$ is $|{\cal
  T}_0|2^k=K2^k$, and the minimum distance is lower bounded by the
minimum distance $d$ of the union normalizer code $C^*$.
\end{theorem}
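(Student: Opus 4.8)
The plan is to treat the two assertions separately. For the dimension, recall from (\ref{eq:full_decomposition}) and the discussion preceding it that for distinct coset representatives of ${\cal N}_0$ the spaces $t{\cal C}_0$ are pairwise orthogonal; each $t_i{\cal C}_0$ is the image of the $2^k$-dimensional space ${\cal C}_0$ under a unitary, hence is itself $2^k$-dimensional, and therefore $\dim{\cal C}=\sum_{i=1}^{K}\dim t_i{\cal C}_0=K2^k$.

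For the distance I would verify the Knill--Laflamme error-detection conditions. Fix an orthonormal basis $|\phi_1\rangle,\ldots,|\phi_{2^k}\rangle$ of ${\cal C}_0$; since the $t_i$ are unitary and the $t_i{\cal C}_0$ are pairwise orthogonal, the states $t_i|\phi_a\rangle$ form an orthonormal basis of ${\cal C}$, and it suffices to exhibit, for every $E\in{\cal P}_n$ with $\wgt(E)\le d-1$, a scalar $c_E$ such that $\langle\phi_a|\,t_i^\dagger E t_j\,|\phi_b\rangle=c_E\,\delta_{ij}\delta_{ab}$. Two ingredients enter: (i) the standard fact that if the $GF(4)$-vector of a Pauli operator $F$ is not in $C_0^*$, then $F$ anticommutes with a generator of ${\cal S}$, and hence $P_0FP_0=0$ for the projector $P_0$ onto ${\cal C}_0$; and (ii) that $t_i^\dagger$ carries the same $GF(4)$-vector as $t_i$, so $t_i^\dagger E t_j$ is a scalar phase times the Pauli operator with vector $E+t_i-t_j$, while (since $|\phi_a\rangle,|\phi_b\rangle$ lie in the range of $P_0$) the matrix element $\langle\phi_a|\,t_i^\dagger E t_j\,|\phi_b\rangle$ is the corresponding entry of $P_0(t_i^\dagger E t_j)P_0$.

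The key observation is that $E+t_i-t_j\notin C_0^*$ whenever $E$ has a nonzero vector and $\wgt(E)\le d-1$. Indeed, if $E+t_i-t_j=v$ for some $v\in C_0^*$, then $E=(v+t_i)-t_j$ is the difference of the two codewords $v+t_i$ and $t_j$ of $C^*$, and these are distinct --- for $i\ne j$ because $t_i-t_j\notin C_0^*$ (the $t_\ell$ represent different cosets of ${\cal N}_0$), and for $i=j$ because then $v=E\ne 0$ --- so $\wgt(E)\ge d$, contradicting $\wgt(E)\le d-1$. Consequently, for such $E$, fact (i) makes every block $P_0(t_i^\dagger E t_j)P_0$ vanish and $c_E=0$ works; and for $E=c_EI$ (zero vector) the diagonal blocks equal $c_EI$, giving $c_E\delta_{ab}$, while the off-diagonal blocks correspond to the vector $t_i-t_j\notin C_0^*$ and vanish by (i). In every case the required identity holds, which verifies the Knill--Laflamme conditions and shows $d({\cal C})\ge d$.

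The crucial point --- and the place where care is needed --- is the reading of ``the minimum distance of $C^*$'': since $C^*$ is in general non-linear, this must mean $\min\{\wgt(x-y):x,y\in C^*,\ x\ne y\}$ rather than merely the minimum nonzero weight of $C^*$, because the former also bounds the weights of the cross-coset differences lying in $C_0^*+(t_i-t_j)$ for $i\ne j$, and it is precisely these cosets that must avoid low-weight vectors in order to annihilate the off-diagonal Knill--Laflamme terms between $t_i{\cal C}_0$ and $t_j{\cal C}_0$. The scalar phases accumulated in $t_i^\dagger E t_j$ are harmless, since only the associated $GF(4)$-vector enters the vanishing arguments, and in the single surviving case $E=c_EI$ the phase is precisely the constant $c_E$.
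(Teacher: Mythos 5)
Your proposal is correct, and it follows the same overall strategy as the paper: both verify the Knill--Laflamme error-detection conditions on the orthonormal basis $\{t_i\ket{\phi_a}\}$, using the fact that conjugating a Pauli error $E$ by the translations shifts its associated vector into the coset $C_0^*+t_i-t_j$. The one genuine difference is in how the diagonal case $i=j$ is handled. The paper treats it separately: it reduces $\bra{\phi_a}t_i^\dagger E t_i\ket{\phi_b}$ to $\pm\bra{\phi_a}E\ket{\phi_b}$ and then appeals to the purity of ${\cal C}_0$ up to the minimum distance of $C_0^*$. You instead fold both cases into a single observation --- if $E+t_i-t_j$ lay in $C_0^*$, then $E$ would be a difference of two distinct codewords of $C^*$ and hence have weight at least $d$ --- after which the vanishing of every off-identity block follows from the one standard fact that $P_0FP_0=0$ whenever the vector of $F$ is outside $C_0^*$. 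This buys you two things: the purity of ${\cal C}_0$ need not be invoked as a separate ingredient (it is subsumed by the $i=j$, $v=E$ instance of your observation), and you are forced to make explicit that the ``minimum distance'' of the non-linear code $C^*$ must be read as the minimum pairwise distance $\min\{\wgt(x-y):x\ne y\}$ rather than the minimum nonzero weight --- a point the paper leaves implicit but which is exactly what the off-diagonal cosets $C_0^*+t_i-t_j$ require. Your handling of the residual case $E\propto I$ and of the irrelevant phases is also correct.
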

\begin{proof}
As ${\cal T}_0$ is a subset of the coset representatives of the
normalizer ${\cal N}_0$, the spaces $t_i{\cal C}_0$, each of which has
dimension $2^k$, are mutually orthogonal. Hence the dimension of the
union code is $K2^k$.  Fixing an orthonormal basis
$\{\ket{c_j}:j=1,\ldots,2^k\}$ of the stabilizer code ${\cal C}_0$,
the set $\{t_i\ket{c_j}:i=1,\ldots,K,\; j=1,\ldots,2^k\}$ is
an orthonormal basis of the union stabilizer code.  Let $E\in{\cal
  P}_n$ be an $n$-qubit Pauli error of weight $0<\wgt(E)< d$.  For
basis states $\ket{c_{i,j}}=t_i\ket{c_j}\in t_i{\cal C}_0$ and
$\ket{c_{i',j'}}=t_{i'}\ket{c_{j'}}\in t_{i'}{\cal C}_0$ we consider
the inner product
\begin{equation}\label{eq:code_condition}
\bra{c_{i,j}} E\ket{c_{i',j'}}=
\bra{c_j}t_i^\dagger E t_{i'}\ket{c_{j'}}.
\end{equation}
For $i\ne i'$, we have $\dist(t_i{\cal C}_0,t_{i'}{\cal
  C}_0)=\min\{\wgt(c+t_i-t_{i'}):c\in C_0^*\}\ge d$, and
hence (\ref{eq:code_condition}) vanishes for $\wgt(E)<d$. For $i=i'$,
we get
\begin{equation}\label{eq:code_condition2}
\bra{c_j}t_i^\dagger E t_i\ket{c_{j'}}
=\pm\bra{c_j} E\ket{c_{j'}}.
\end{equation}
As the code ${\cal C}_0$ is pure up to the minimum distance of
$C^*_0\subset C^*$, equation (\ref{eq:code_condition2}) vanishes
as well.
\end{proof}
\begin{remark}
We note that similar to stabilizer codes, the true minimum distance of
a union stabilizer code might be higher.  The true minimum distance is
given by
\begin{alignat*}{2}
\min\{\dist(c+t_i,c'+t_{i'})&{}:t_i,t_{i'}\in{\cal T}_0,\\
&c,c'\in C^*_0|c+t_i-(c'+t_{i'})\notin \widetilde{C}_0\}\\
\hbox to 0pt{\hss$=\min\{\wgt(v):v \in (C^*-C^*)\setminus \widetilde{C}_0\}$,\hss}
\end{alignat*}
where $(C^*-C^*)=\{a-b:a,b \in C^*\}$ denotes the set of all
differences of vectors in $C^*_0$, and $\widetilde{C}_0$ is the
symplectic dual of the additive closure of $C^*$.
\end{remark}
In order to construct union quantum codes, we may start with a
stabilizer code ${\cal C}_0$ and use a \emph{search graph} whose
vertices are the mutually orthogonal translates $\{t{\cal C}_0:t \in
{\cal T}_0\}$ of the stabilizer code.  Two vertices are connected by
an edge if and only if the distance between them is at least $d$,
where $d$ is the desired minimum distance.  For simplicity, we also
require that the code ${\cal C}_0$ is pure up to $d_0\ge d$.  The
distance between two translates can be computed using Lemma
\ref{lemma:distance}.  This allows to use stabilizer codes ${\cal
  C}_0$ of arbitrary dimension, and hence allows to go beyond the case
of stabilizer states (or graph states) as, e.\,g., in
\cite{CSSZ07,YCO07}.  We note that the construction of non-additive
quantum codes of \cite{AKP04} is also based on taking the union of
orthogonal images of a stabilizer code.

\section{Union Stabilizer Codes from Binary Codes}
\subsection{CSS-like codes}
Given two linear binary codes $C_1=[n,k_1,d_1]$ and $C_2=[n,k_2,d_2]$
with $C_2^\bot\subset C_1$, the so-called CSS construction (see,
e.\,g., \cite{Stea99b}) yields a quantum error-correcting code ${\cal
  C}=[[n,k_1+k_2-n,d]]$ with $d\ge\min(d_1,d_2)$.  Starting with this
CSS code, we consider unions of cosets of the binary codes $C_i$,
i.\,e.,
\[
\widetilde{C}_i=\bigcup_{t^{(i)}\in{\cal T}_i} C_i+t^{(i)}
\]
such that the minimum distance of the codes $\widetilde{C}_i$ is at
least $\widetilde{d}\le d$.  Using the translations
$\{(t^{(1)}|t^{(2)}):t^{(1)}\in {\cal T}_1, t^{(2)}\in{\cal T}_2\}$ we
obtain a CSS-like union stabilizer code of dimension $|{\cal
  T}_1|\cdot|{\cal T}_2|\cdot 2^{k_1+k_2-n}$ whose minimum distance is
at least $\widetilde{d}$.  If $G_1=\left(\begin{smallmatrix}H_2\\\hline
    G_{12}\end{smallmatrix}\right)$ and
$G_2=\left(\begin{smallmatrix}H_1\\\hline
    G_{21}\end{smallmatrix}\right)$ are generator matrices of the
codes $C_i$, where $H_i$ is a generator matrix of the dual code
$C_i^\bot$, the corresponding vectors are as shown in Fig.
\ref{fig:union_CSS_generators}.

\begin{figure}[hbt]
\[
\def\dashes{\rule{2mm}{0.5pt}\rule{1mm}{0pt}\rule{2mm}{0.5pt}\rule{1mm}{0pt}\rule{2mm}{0.5pt}\rule{1mm}{0pt}\rule{2mm}{0.5pt}\rule{1mm}{0pt}\rule{2mm}{0.5pt}}
\begin{array}{c}
\left(\def\arraystretch{1.3}
\begin{array}{c|c}
H_2&0\\
0&H_1\\[0.75ex]
\hline
G_{12}&0\rule{0pt}{2.75ex}\\[-1ex]
\multicolumn{1}{@{}c|}{\dashes}&
\multicolumn{1}{c@{}}{\dashes}\\
0&G_{21}\\
\hline
\end{array}
\right)\\
\left\{\def\arraystretch{1.3}
\begin{array}{c|c}
\hbox to 12mm{\hfill$t_1^{(1)}$\hfill}&
\hbox to 12mm{\hfill$t_1^{(2)}$\hfill}\\[-1ex]
\vdots&\vdots\\[-0.8ex]
t_1^{(1)}&t_{K_2}^{(2)}\\
\vdots&\vdots\\
t_{K_1}^{(1)}&t_1^{(2)}\\[-1ex]
\vdots&\vdots\\[-0.8ex]
t_{K_1}^{(1)}&t_{K_2}^{(2)}
\end{array}
\right\}
\end{array}
\]
\caption{Arrangements of the vectors associated with a CSS-like union
  stabilizer code.\label{fig:union_CSS_generators}} 
\end{figure}

\subsection{Enlargement construction}
Steane has presented a construction that allows to increase the
dimension of a CSS code \cite{Stea99b}.  For this, he starts with the
CSS construction applied to a binary code $C=[n,k,d]$ which contains
its dual, yielding a CSS code ${\cal C}_0=[[n,2k-n,d]]$.  Using a code
$C'=[n,k'>k+1,d']$ which contains $C$, he obtains a quantum code
$[[n,k+k'-n,\min(d,\lceil 3d'/2\rceil)]]$.  The resulting code can
also be considered as a union stabilizer code.  If $D$ is a generator
matrix of the complement of $C$ in $C'$ and $A$ is a fixed point free,
invertible linear map, the translations can be defined as
\[
{\cal T}_0=\{(vD|vAD):v\in \F_2^{k'-k}\}.
\]
The key observation \cite{Stea99b} for proving the lower bound on the
minimum distance is that the weight of an operator $g=(g^X|g^Z)$ can
be expressed in terms of the Hamming weight of the binary vectors and
their sum:
\[
\wgt((g^X|g^Z))=\frac{1}{2}(\wgt(g^X)+\wgt(g^Z)+\wgt(g^X+g^Z)).
\]
As ${\cal T}_0$ is closed under addition and the properties of $A$
ensure that $0\ne t^X\ne t^Z\ne 0$ for any non-zero element
$(t^X|t^Z)\in{\cal T}_0$, the weight of all three binary vectors is
lower bounded by $d'$.

\section{Quantum Codes from Reed-Muller, Goethals, and Preparata Codes}
Using the CSS-like construction of the previous section, we now
construct some families of non-additive quantum codes.  For this, we
use the Goethals codes ${\cal G}(m)$ and the Preparata codes which are
nonlinear binary codes of length $n=2^m$ for $m\ge 4$, $m$ even. Some
of the properties of these codes are summarized as follows (see,
e.\,g., \cite{Her90}):
\begin{itemize}
\item Both the Goethals code ${\cal G}(m)$ and the Preparata code
${\cal P}(m)$ are unions of cosets of the Reed-Muller code ${\cal
  R}(m):={\rm RM}(m-3,m)$.  Furthermore, they are nested subcodes of 
${\rm RM}(m-2,m)$, i.\,e.,
\[
{\rm RM}(m-3,m)\subset {\cal G}(m)\subset {\cal P}(m)\subset {\rm RM}(m-2,m).
\]
\item The parameters of the codes are
\begin{alignat*}{2}
  {\rm RM}(m-3,m)=
{\cal R}(m)&=[2^m,2^m-\tbinom{m}{2}-m-1,8]\\
{\cal G}(m)&=(2^m,2^{2^m-3m+1},8)\\
{\cal P}(m)&=(2^m,2^{2^m-2m},6)\\
{\rm RM}(m-2,m)&=[2^m,2^m-m-1,4].
\end{alignat*}
\end{itemize}

Steane has constructed a family of additive quantum codes from
Reed-Muller codes \cite{Stea99a}.  The codes are obtained applying the
enlargement construction of \cite{Stea99b} to the chain of codes
\begin{alignat*}{2}
{\rm RM}(r,m)\subset {\rm RM}(r,m)^\bot&={\rm RM}(m-r-1,m)\\
                                      &\subset{\rm RM}(m-r,m).
\end{alignat*}
In particular, for $r=2$ and $m\ge 5$ this yields additive QECCs
${\cal C}_1=[[2^m,2^m-\binom{m}{2}-2m-2,6]]$, while using only the CSS
construction, one obtains ${\cal C}_0=[[2^m,2^m-2\binom{m}{2}-2m-2,8]]$.
As the Goethals code ${\cal G}(m)$ is the union of $K_G=2^{\binom{m}{2}-2m+2}$
cosets of ${\cal R}(m)$, we can construct a CSS-like union
stabilizer code based on ${\cal C}_0$.  The minimum distance of the
resulting non-additive code is $8$ and its dimension is
$K_G^2\dim({\cal C}_0)=2^{2^m-6m+2}$. 

Replacing the Goethals code by the Preparata code ${\cal P}(m)$, we
have $K_P=2^{\binom{m}{2}-m+1}$ cosets of ${\cal R}(m)$. This results
in a CSS-like union stabilizer code with minimum distance $6$ and
dimension $K_P^2\dim({\cal C}_0)=2^{2^m-4m}$.

Both the union stabilizer codes based on Goethals codes and those
based on Preparata codes are superior to the additive codes derived
from Reed-Muller codes.  The parameters of the first codes in these
families are as follows:
\[\def\arraystretch{1.3}
\begin{array}{c|c|c}
\text{enlarged RM} & \text{Goethals}& \text{Preparata}\\
\hline
[[   64,  35, 6]] & ((  64, 2^{ 30}, 8)) & ((  64, 2^{ 40}, 6))\\{}
[[  256, 210, 6]] & (( 256, 2^{210}, 8)) & (( 256, 2^{224}, 6))\\{}
[[ 1024, 957, 6]] & ((1024, 2^{966}, 8)) & ((1024, 2^{984}, 6))
\end{array}
\]
However, applying the enlargement construction to extended primitive
BCH codes results in stabilizer codes with parameters
$[[2^m,2^m-5m-2,8]]$ and $[[2^m,2^m-3m-2,6]]$ (see \cite{Stea99b}).

\begin{figure}
\[
\def\dashes{\rule{2mm}{0.5pt}\rule{1mm}{0pt}\rule{2mm}{0.5pt}\rule{1mm}{0pt}\rule{2mm}{0.5pt}\rule{1mm}{0pt}\rule{2mm}{0.5pt}\rule{1mm}{0pt}\rule{2mm}{0.5pt}}
\def\sdashes{\rule{2mm}{0.5pt}\rule{1mm}{0pt}\rule{2mm}{0.5pt}\rule{1mm}{0pt}\rule{2mm}{0.5pt}}
\begin{array}{@{}c@{}}
\left(
\begin{array}{@{}c@{}}
\begin{array}{@{}c|c@{}}
X&Z\\
\end{array}\\
\hline
\ol{Z}\rule{0pt}{2.75ex}\\[-1ex]
\dashes\\
\ol{X}\\
\hline
\end{array}
\right)\\[4.25ex]
\left\{
\hbox to 14mm{\rule[-1ex]{0pt}{2.2ex}\hfill ${\cal T}_0$\hfill}
\right\}
\end{array}
\stackrel{Q_1}{\longrightarrow}
\begin{array}{@{}c@{}}
\left(
\begin{array}{@{}c@{\,}|@{\,}c@{}}
00&I0\\
\hline
00&0I\rule{0pt}{2.75ex}\\[-1ex]
\sdashes&\sdashes\\
0I&00\\
\hline
\end{array}
\right)\\[4.25ex]
\left\{
\hbox to 17mm{\rule[-1ex]{0pt}{2.2ex}\hfill ${\cal T}_0^{Q_1}$\hfill}
\right\}
\end{array}
\stackrel{Q_c}{\longrightarrow}
\begin{array}{@{}c@{}}
\left(
\begin{array}{@{}c@{\,}|@{\,}c@{}}
\hbox to 8mm{\hfill$00$\hfill}&\hbox to 8mm{\hfill$I0$\hfill}\\
\hline
00&0I\rule{0pt}{2.75ex}\\[-1ex]
\sdashes&\sdashes\\
0I&00\\
\hline
\end{array}
\right)\\[4.25ex]
\left\{
\begin{array}{@{}c|c@{}}
\hbox to 7mm{\hfill$c_10$\hfill}&\hbox to 7mm{\hfill$00$\hfill}\\[-1ex]
\vdots&\vdots\\
c_K0&00
\end{array}
\right\}\\[-4.5ex]
\end{array}
\rule[-10ex]{0pt}{0pt}
\]
\caption{Transformation of the union stabilizer code given by the
inverse encoding circuits $Q_1$ and $Q_c$.\label{fig:encoding_transform}}
\end{figure}

\begin{figure*}
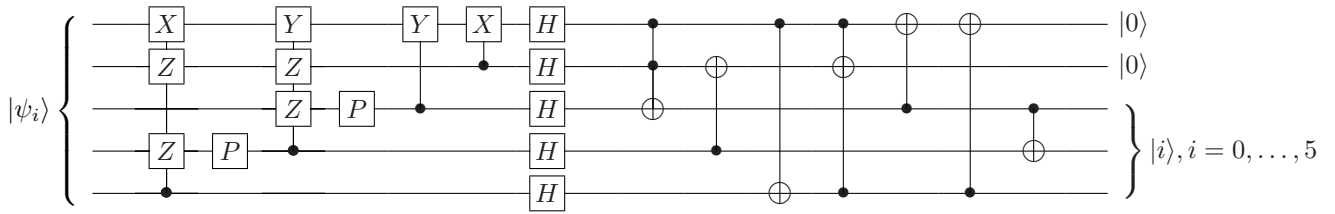

\centerline{
\unitlength0.8\unitlength
\inputwires[,,$\ket{\psi_i}\left\{\rule{0pt}{50\unitlength}\right.$](5)
\CGate(5,1,5){$X$}\kern-30\unitlength%
\CGate(4,1,4){$Z$}\kern-30\unitlength%
\CGate(3,2,3){$Z$}
\OneQubitGate(4,5){$P$}
\CGate(4,1,5){$Y$}\kern-30\unitlength%
\CGate(3,1,4){$Z$}\kern-30\unitlength%
\CGate(2,1,3){$Z$}
\OneQubitGate(3,5){$P$}
\CGate(3,1,5){$Y$}
\CGate(2,1,5){$X$}
\rlap{\OneQubitGate[4](1,1){$H$}}%
\rlap{\OneQubitGate[3](1,1){$H$}}%
\rlap{\OneQubitGate[2](1,1){$H$}}%
\rlap{\OneQubitGate[1](1,1){$H$}}%
\OneQubitGate(1,1){$H$}
\wires[20](5)
\multCNOT[1,2](3,5)
\CNOT(4,2,5) 
\CNOT(1,5,5) 
\multCNOT[1,5](2,5) 
\CNOT(3,1,5) 
\CNOT(5,1,5) 
\CNOT(3,4,5)
\outputwires[$\ket{0}$,$\ket{0}$,,{$\left.\rule{0pt}{25\unitlength}\right\}\ket{i},
  i=0,\ldots,5$}](5)
\rule{50\unitlength}{0pt}
}
\caption{Inverse encoding circuit for the non-additive code
  $((5,6,2))$. The first set of gates including the Hadamard
  transformations implements the inverse encoding circuit $Q_1$ for
  the stabilizer code $[[5,0,3]]$, followed by 5 ${\rm CNOT}$ and 2
  Toffoli gates implementing the classical circuit $Q_c$. 
\label{fig:encoding_circuit}}
\end{figure*}

\section{Encoding Circuits}
In \cite{GRB03} we have shown how to compute a quantum circuit
consisting of Clifford gates only that transforms any stabilizer
${\cal S}$ given by the binary $(n-k)\times 2n$ matrix $(X|Z)$ into
the stabilizer of a trivial code given by $(0|I0)$, where $I$ is an
identity matrix of size $n-k$.  The corresponding trivial code
corresponds to the mapping $\ket{\phi}\mapsto\ket{0\ldots
  0}\ket{\phi}$.  We denote the resulting quantum circuit that
corresponds to the inverse encoding circuit of $(X|Z)$ by $Q_1$. Note
that we can apply $Q_1$ to all the operators defining the code as
illustrated in Fig.  \ref{fig:encoding_transform}.  Further note, that
for the trivial stabilizer code, the ``encoded'' $X$- and
$Z$-operators are weight-one Pauli operators $\sigma_x$ and
$\sigma_z$, respectively, acting on the last $k$ qubits. As the
transformed translations ${\cal T}_0^{Q_1}$ define cosets of the
normalizer, we may choose them such that they are tensor products of
operators $\sigma_x$ and identity acting on the first $n-k$ qubits
only.  Then we have the trivial union code spanned by a set of $K2^k$
basis states of the form $\ket{c_i}\ket{j}$, where
$\{\ket{j}:j=0,\ldots,2^k-1\}$ is the computational basis of $k$
qubits and $\{c_i:i=0,\ldots,K-1\}$ is a set of bit strings of length
$n-k$.  In order to obtain a standard basis for our input space of
dimension $K2^k$, we need a quantum circuit $Q_c$ mapping
$\ket{c_i}\mapsto\ket{i}$ for $i=0,\ldots,K-1$.  Note that this is a
purely classical circuit which can be realized, e.\,g., using
$\sigma_x$, ${\rm CNOT}$ gates, and Toffoli gates.

We illustrate this for the non-additive code $((5,6,2))$ which is a
union stabilizer code derived from the stabilizer state ${\cal
  C}_0=[[5,0,3]]$.  For a stabilizer code with $k=0$, there are no
encoded $X$- and $Z$-operators.  So the code $((5,6,2))$ is specified
by five generators of the stabilizer and six translations.  Using an
inverse encoding circuit $Q_1$, these operators are transformed as
follows:
\begin{equation}\label{eq:5bit_encoding}
\begin{array}{c}
\left(
\begin{array}{@{}ccccc@{}}
X & X & X & X & X\\ 
X & X & Z & I & Z\\ 
X & Z & I & Z & X\\ 
Y & I & Y & Z & Z\\ 
Y & Z & Z & Y & I\\
\hline
\end{array}
\right)\\
\left\{
\begin{array}{ccccc}
I & I & I & I & I\\ 
I & I & Z & Z & X\\ 
I & I & I & X & X\\ 
I & I & I & Z & Y\\ 
I & I & Z & Y & Y\\ 
I & I & Z & X & Z
\end{array}
\right\}
\end{array}
\stackrel{Q_1}{\longrightarrow}
\begin{array}{c}
\left(
\begin{array}{@{}c|c@{}}
00000 & 10000 \\ 
00000 & 01000 \\ 
00000 & 00100 \\ 
00000 & 00010 \\ 
00000 & 00001 \\
\hline
\end{array}
\right)\\
\left\{
\begin{array}{@{}c|c@{}}
00000 & 00000 \\ 
01010 & 00000 \\ 
11011 & 00000 \\ 
01111 & 00000 \\ 
11100 & 00000 \\
10010 & 00000
\end{array}
\right\}
\end{array}
\end{equation}
It remains to find a (classical) quantum circuit $Q_c$ that maps the
six binary strings on the right hand side of (\ref{eq:5bit_encoding})
to say the binary representations of $i=0,\ldots,5$.  Using a
breath-first search among all circuits composed of $\sigma_x$, ${\rm
  CNOT}$, and Toffoli gates, we found the minimal realization shown
together with the circuit $Q_1$ in Fig. \ref{fig:encoding_circuit}.

\section{Conclusions}
The approach presented in this paper generalizes naturally to
the construction of non-additive quantum codes for higher dimensional
systems.  In order to obtain other families of non-additive quantum
codes, it is interesting to study classical non-linear codes which can
be decomposed into cosets of linear codes, similar to the Preparata and
Goethals codes.

\section*{Acknowledgments}
We acknowledge fruitful discussions with Vaneet Aggarwal and Robert
Calderbank.  Markus Grassl would like to thank NEC Labs., Princeton
for the hospitality during his visit.  This work was partially
supported by the FWF (project P17838).

\end{document}